\newtheorem{thm}{Theorem}
\newtheorem{cor}{Corollary}
\title{Fisher Information and Mutual Information Constraints}
\author{Leighton Pate Barnes and Ayfer \"Ozg\"ur\\
Stanford University, Stanford, CA 94305\\
 Email: \{lpb, aozgur\}@stanford.edu
\thanks{This work was supported in part by NSF award CCF-1704624 and by a Google Faculty Research Award. Thanks to an anonymous reviewer for catching a bug in the proof of Corollary 1.}}
\begin{document}

\maketitle

\begin{abstract}
 We consider the processing of statistical samples $X\sim P_\theta$ by a channel $p(y|x)$, and characterize how the statistical information from the samples for estimating the parameter $\theta\in\mathbb{R}^d$ can scale with the mutual information or capacity of the channel. We show that if the statistical model has a sub-Gaussian score function, then the trace of the Fisher information matrix for estimating $\theta$ from $Y$ can scale at most linearly with the mutual information between $X$ and $Y$. We apply this result to obtain minimax lower bounds in distributed statistical estimation problems, and obtain a tight preconstant for Gaussian mean estimation. We then show how our Fisher information bound can also imply mutual information or Jensen-Shannon divergence based distributed strong data processing inequalities.
\end{abstract}

\section{Introduction}

In this work, we consider the processing of statistical samples $X\sim P_\theta$ by a channel $p(y|x)$, and try to understand how the statistical information from the samples for estimating the parameter $\theta$ can scale with the mutual information or capacity of the channel. In particular, we begin by looking at the Fisher information for estimating $\theta$ from the processed data $Y$. Fisher information describes the curvature of the statistical model as one moves around the parameter space, and immediately implies lower bounds for the error in estimating $\theta$ from the statistical samples via the well-known Cram\'er-Rao lower bound \cite{cramer,rao} for unbiased estimators. It can also imply lower bounds for arbitrarily biased estimators in an asymptotic sense \cite{van2000asymptotic}, or in a Bayesian setting via the van Trees inequality \cite{gill}.

Fisher information satisfies a data processing inequality in the sense that it must decrease during processing \cite{zamir}. In our main result, we develop a strong data processing inequality that more precisely quantifies the maximum possible Fisher information from $Y$ -- in terms of the mutual information between $X$ and $Y$ -- after processing the data. This is a generalization of recent results on Fisher information due to the authors in both the communication constrained \cite{barnes} and privacy constrained \cite{ldp_fisher} settings. Other works such as \cite{acharyaetal} have also considered statistical inference under general channels and obtain some bounds for communication and privacy constrained channels in particular. In concurrent work by the same authors \cite{acharya2}, they obtain bounds for mutual information constrained channels that are similar to those in the present paper. Their work differs in that they only treat discrete output spaces $\mathcal{Y}$, they do not obtain lower bounds for fully interactive communication protocols, and they use a slightly different sub-Gaussianity assumption. One application where general mutual information constrained channels are needed is when communication of statistical samples is done over an analog channel such as an additive white Gaussian noise channel or Gaussian multiple access channel, such as in recent works \cite{CZ1} and \cite{CZ2} which seek to jointly study the communication and estimation problem.

In our main result, we show that if the score of the statistical model is sub-Gaussian -- an assumption that is also used in previous works \cite{barnes,ldp_fisher} -- then the trace of the Fisher information matrix for estimating $\theta$ from $Y$ can scale at most linearly with the mutual information between $X$ and $Y$. We also show by the example of the Gaussian location model that the preconstant we obtain is optimal and cannot be improved. We then apply this result in a distributed setting where there are $n$ nodes each with a sample $X_i$ taken independently from $P_\theta$, and where the nodes can communicate in multiple rounds of communication via a public blackboard. We show how Fisher information from the total blackboard transcript can be similarly bounded, and develop minimax lower bounds in this distributed estimation setting. Finally, in the last section, we show how our Fisher information upper bound can also imply mutual information or Jensen-Shannon divergence based distributed strong data processing inequalities similar to those from \cite{braverman}.

\section{Preliminaries}

Suppose that $\{P_\theta\}_{\theta\in\Theta}$ is a family of probability distributions parametrized by $\theta\in\Theta\subseteq\mathbb{R}^d$ that is dominated by some sigma-finite measure $\mu$. Let $p_\theta$ be the density of $P_\theta$ with respect to $\mu$. Let $X$ be a statistical sample drawn from $P_\theta$, and let $Y$ be the output of a channel with transition probability density $p(y|x)$ (with respect to some dominating measure $\nu$ on the sample space $\mathcal{Y}$) when $x$ is the input. In this paper we analyze the Fisher information for estimating $\theta$ from the processed sample $Y$ and show how it can scale with the mutual information $I_\theta(X;Y)$.\footnote{We use $I_\theta(X;Y)$ to denote the mutual information between $X$ and $Y$ when $X$ is drawn from $P_\theta$. In the case that there is a prior distribution on $\theta$ this is the same as $I(X;Y|\theta)$.}

Recall some Fisher information basics. The score for $X$ is defined as
$$S_\theta(X) = \nabla_\theta \log p_\theta(X) \; .$$
The Fisher information matrix for the processed samples $Y$ is
$$I_Y(\theta) = \mathbb{E}\left[\left(\nabla_\theta \log p_\theta(Y)\right)\left(\nabla_\theta \log p_\theta(Y)\right)^T\right]$$
and we can characterize the trace of this matrix by
\begin{equation} \label{eq:decomp}
\mathsf{Tr}(I_Y(\theta)) = \mathbb{E}_Y\|\mathbb{E}_X[S_\theta(X)|Y]\|_2^2 \; .
\end{equation}
The decomposition in \eqref{eq:decomp} is one of the main tools used by the authors in \cite{barnes,ldp_fisher}, and we defer its proof to those references. The proof is a straightforward computation, but requires the interchange of limiting operations, specifically the interchange between integration over the sample space $\mathcal{X}$ and differentiation with respect to the parameter components $\theta_j$. This interchange can be justified under the following regularity conditions (see \cite{borovkov} \S 26 Lemma 1):
\begin{itemize}
\item[(i)] The square-root density $\sqrt{p_\theta(x)}$ is continuously differentiable with respect to each component $\theta_j$ at $\mu$-almost all $x$.
\item[(ii)] The Fisher information for each component $\theta_j$, $\mathbb{E}\left[\left(\frac{\partial}{\partial\theta_j}\log p_\theta(X)\right)^2\right]$, exists and is a continuous function of $\theta_j$.
\end{itemize}
One consequence of these two conditions is that the score random vector has mean zero, i.e.,
$$\mathbb{E}[S_\theta(X)] = 0 \; .$$
One additional regularity condition is needed in the present paper to interchange limits because we are not assuming that the channel $p(y|x)$ has a finite output alphabet $\mathcal{Y}$ (such as in \cite{barnes}), or a point-wise local differential privacy condition (such as in \cite{ldp_fisher}):
\begin{itemize}
\item[(iii)] The channel $p(y|x)$ is square integrable in the sense that $\int p(y|x)^2p_\theta(x)d\mu(x) < \infty$ at $\nu$-almost all $y$ for each $\theta$.
\end{itemize}

Finally, recall some basic facts about sub-Gaussian random variables. We say that a mean-zero random variable $X$ is sub-Gaussian with parameter $N$ if
$$\mathbb{E}\left[e^{\lambda X}\right] \leq e^\frac{\lambda^2N^2}{2}$$
for all $\lambda\in\mathbb{R}$. Furthermore, such random variables $X$ satisfy
$$\mathbb{E}\left[e^\frac{\lambda X^2}{2N^2}\right] \leq \frac{1}{\sqrt{1-\lambda}}$$
for each $\lambda\in[0,1)$ (see, for example, \cite{wainwright} \S 2.1 and \S 2.4).


\section{Fisher Information Bound} \label{sec2}

In this section we state and prove the following main result concerning how Fisher information can scale with the mutual information between $X$ and $Y$.
\begin{thm} \label{thm1}
Suppose that $\langle u,S_\theta(X) \rangle$ is sub-Gaussian with parameter $N$ for any unit vector $u\in\mathbb{R}^d$. Under regularity conditions (i)-(iii) above,
$$\mathsf{Tr}(I_Y(\theta)) \leq 2N^2I_\theta(X;Y) \; .$$
\end{thm}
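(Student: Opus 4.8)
The starting point is the trace decomposition $\mathsf{Tr}(I_Y(\theta)) = \mathbb{E}_Y\|\mathbb{E}_X[S_\theta(X)\mid Y]\|_2^2$ from \eqref{eq:decomp}. Writing $g(Y) := \mathbb{E}_X[S_\theta(X)\mid Y]$, I want to bound $\mathbb{E}_Y\|g(Y)\|_2^2$ by something proportional to $I_\theta(X;Y) = \mathbb{E}_{X,Y}\log\frac{p(y\mid x)}{p(y)}$. The natural idea is a variational/duality argument: for each fixed $y$, pick the unit vector $u = u_y := g(y)/\|g(y)\|_2$, so that $\|g(y)\|_2 = \langle u_y, \mathbb{E}_X[S_\theta(X)\mid Y=y]\rangle = \mathbb{E}_X[\langle u_y, S_\theta(X)\rangle \mid Y=y]$. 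Thus $\mathsf{Tr}(I_Y(\theta)) = \mathbb{E}_Y\left[\|g(Y)\|_2\cdot \mathbb{E}_X[\langle u_Y, S_\theta(X)\rangle\mid Y]\right]$. The plan is to control the inner conditional expectation of the score against the change of measure from $P_\theta$ to the conditional law of $X$ given $Y=y$, whose Radon–Nikodym derivative is $p(y\mid x)/p(y)$.

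Concretely, for any scalar random variable $Z$ that is sub-Gaussian with parameter $N$ under $X\sim P_\theta$ (here $Z = \langle u_y, S_\theta(X)\rangle$, which is mean-zero), and any $\lambda > 0$, I would use the Donsker–Varadhan / Gibbs variational inequality in the form
$$\mathbb{E}_X\!\left[Z\cdot \tfrac{p(y\mid X)}{p(y)}\right] \;=\; \mathbb{E}_X\!\left[\lambda^{-1}(\lambda Z)\tfrac{p(y\mid X)}{p(y)}\right] \;\le\; \lambda^{-1}\!\left(\log \mathbb{E}_X\big[e^{\lambda Z}\big] + D\!\left(P_{X\mid Y=y}\,\|\,P_\theta\right)\right) \;\le\; \lambda^{-1}\!\left(\tfrac{\lambda^2 N^2}{2} + D\!\left(P_{X\mid Y=y}\,\|\,P_\theta\right)\right),$$
using sub-Gaussianity in the last step. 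Optimizing over $\lambda$ gives $\mathbb{E}_X[\langle u_y,S_\theta(X)\rangle\mid Y=y] \le N\sqrt{2\,D(P_{X\mid Y=y}\|P_\theta)}$. Then $\mathsf{Tr}(I_Y(\theta)) \le \mathbb{E}_Y\big[\|g(Y)\|_2 \cdot N\sqrt{2 D(P_{X\mid Y=y}\|P_\theta)}\big]$. To close the loop I need a second handle on $\|g(Y)\|_2 = \mathbb{E}_X[\langle u_Y,S_\theta(X)\rangle\mid Y]$; applying the same bound again would give $\|g(y)\|_2 \le N\sqrt{2D(P_{X\mid Y=y}\|P_\theta)}$, hence $\mathsf{Tr}(I_Y(\theta)) \le 2N^2\,\mathbb{E}_Y D(P_{X\mid Y=y}\|P_\theta) = 2N^2 I_\theta(X;Y)$, where the last equality is the standard identity $I_\theta(X;Y) = \mathbb{E}_Y D(P_{X\mid Y}\|P_X)$.

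I expect the main obstacle to be rigor rather than the algebra: justifying the Donsker–Varadhan step and the interchange of expectation/differentiation requires the integrability hypotheses, and this is exactly where regularity condition (iii) enters — one needs $\mathbb{E}_X[(p(y\mid X)/p(y))^2] < \infty$ (equivalently $\int p(y\mid x)^2 p_\theta(x)\,d\mu(x) < \infty$) so that the product $\langle u_y,S_\theta(X)\rangle\cdot p(y\mid X)/p(y)$ is integrable and the variational bound is valid, and also to legitimately write $g(y) = \mathbb{E}_X[S_\theta(X)\,p(y\mid X)/p(y)]$ with $g$ being (a version of) $\nabla_\theta\log p_\theta(y)$ up to the interchange of $\nabla_\theta$ and $\int d\mu(x)$. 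A secondary subtlety is measurability of $y\mapsto u_y$ on the set where $g(y)\neq 0$ (on $\{g(y)=0\}$ the integrand vanishes, so it is harmless), which follows since $g$ is measurable. Once these points are handled, the two-fold application of the sub-Gaussian/Donsker–Varadhan estimate and the mutual-information identity assemble into the claimed inequality.
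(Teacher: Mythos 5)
Your proof is correct, and it takes a genuinely different route from the paper's. The paper proves the bound by ``lifting'' to $B$ i.i.d.\ copies of the model, invoking the law of large numbers so that the information density $\frac{1}{B}\log\frac{p({\bf y}|{\bf x})}{p_\theta({\bf y})}$ concentrates at $I_\theta(X;Y)$, splitting the integral into a typical set and its complement, controlling the typical part via the sub-Gaussian bound $\mathbb{E}[e^{\lambda Z^2/2N^2}]\le (1-\lambda)^{-1/2}$ and the atypical part via Cauchy--Schwarz and fourth moments of the score, and finally sending $B\to\infty$ and $\lambda\to 1$. You instead apply the Donsker--Varadhan / Gibbs variational inequality directly at each fixed $y$, with $Q=P_{X|Y=y}$ and $P=P_\theta$, which is exactly the standard sub-Gaussian transportation lemma $\mathbb{E}_Q[Z]\le N\sqrt{2D(Q\|P)}$; combined with $\|\mathbb{E}_X[S_\theta(X)|Y=y]\|_2=\mathbb{E}_X[\langle u_y,S_\theta(X)\rangle|Y=y]$, squaring, and the identity $I_\theta(X;Y)=\mathbb{E}_Y D(P_{X|Y}\|P_X)$, this yields the same constant with no limiting arguments. (A small remark: your ``two-fold application'' is redundant --- a single application of $\|g(y)\|_2\le N\sqrt{2D(P_{X|Y=y}\|P_\theta)}$ followed by squaring already closes the argument.) Your approach is shorter, avoids the fourth-moment/atypical-set bookkeeping entirely, and in fact proves the stronger pointwise statement $\|\mathbb{E}_X[S_\theta(X)|Y=y]\|_2^2\le 2N^2 D(P_{X|Y=y}\|P_\theta)$ for almost every $y$, of which the theorem is the average over $Y$; the integrability caveats you flag (finiteness of $\mathbb{E}_{P_\theta}[e^{\lambda Z}]$ from sub-Gaussianity, absolute continuity $P_{X|Y=y}\ll P_\theta$, and the case $D=\infty$ being vacuous) are all benign. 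The paper's tensorization argument buys nothing extra here beyond a more overtly information-theoretic flavor, so your route is a legitimate and arguably preferable alternative proof.
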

Note that if $Y\in[1:2^k]$, i.e. $Y$ is constrained to being a $k$-bit message, then $I_\theta(X;Y) \leq H(Y) \leq k$ and we recover the sub-Gaussian case from Theorem 2 of \cite{barnes} as a special case. Similarly, if $p(y|x)$ is a locally differentially private mechanism in the sense that $\frac{p(y|x)}{p(y|x')} \leq e^\varepsilon$ for any $x,x',y$, then $I_\theta(X;Y) = O(\min\{\varepsilon,\varepsilon^2\})$ which recovers Propositions 2 and 4 from \cite{ldp_fisher}, again as a special case.

Note also that the mutual information $I_\theta(X;Y)$ is upper bounded by the capacity of the channel $p(y|x)$ since the capacity is the maximum mutual information over all possible input distributions. In this way we can also think of the above theorem as an upper bound on Fisher information in terms of the capacity of the channel that processes the samples.
\begin{proof}
We begin by ``lifting'' the problem to higher dimensions by considering a new $dB$-dimensional statistical model
$${\bf X} \sim {\bf P}_\theta = \prod_{i=1}^B P_{\theta_i}$$
where ${\bf \theta}=(\theta_1,\ldots,\theta_B)$, ${\bf X}=(X_1,\ldots,X_B)$, and ${\bf Y}=(Y_1,\ldots,Y_B)$. Each $X_i$ is drawn independently according to $P_{\theta_i}$ from the original $d$-dimensional model, and each $Y_i$ is the corresponding output of the channel. Note that
\begin{equation}
\mathsf{Tr}(I_{\bf Y}({\bf \theta})) = \sum_{i=1}^B \mathsf{Tr}(I_{Y_i}(\theta_i))
\end{equation}
and that when $\theta_0 = \theta_1 = \ldots = \theta_B$ we have
\begin{equation}
\mathsf{Tr}(I_{\bf Y}({\bf \theta})) = B\mathsf{Tr}(I_Y(\theta_0)) \; .
\end{equation}
We will therefore proceed by analyzing $\mathsf{Tr}(I_{\bf Y}({\bf \theta}))$ evaluated at the specific ${\bf \theta}$ values with $\theta_0 = \theta_1 = \ldots = \theta_B$. Note that by taking scaled sums of independent sub-Gaussian random variables, the new $dB$-dimensional model has a score function that is sub-Gaussian with the same constant $N$ as that of the original model. We use the decomposition shown in \eqref{eq:decomp} for the new $dB$-dimensional model:
\begin{align}
\mathsf{Tr}(I_{\bf Y}({\bf \theta})) & = \mathbb{E}_{\bf Y}\|\mathbb{E}_{\bf X}[S_{\bf \theta}({\bf X})|{\bf Y}]\|_2^2 \nonumber \\
& =  \mathbb{E}_{\bf Y}\left[\left(\mathbb{E}_{\bf X}\left[\frac{p({\bf Y}|{\bf X})}{p_{\bf \theta}({\bf Y})}\langle u_{\bf Y}, S_\theta({\bf X})\rangle\right]\right)^2\right] \label{eq:ratio}
\end{align}
where $$u_{\bf Y} = \frac{\mathbb{E}_{\bf X}[S_{\bf \theta}({\bf X})|{\bf Y}]}{\|\mathbb{E}_{\bf X}[S_{\bf \theta}({\bf X})|{\bf Y}]\|_2} \; .$$
The key point, and the reason for doing the lifting step, is that the ratio inside the expectation in \eqref{eq:ratio} concentrates around $e^{BI_{\theta_0}(X;Y)}$ as $B$ gets large. More concretely, by the strong law of large numbers,
\begin{align*}
\frac{1}{B}\log \frac{p({\bf y}|{\bf x})}{p_{\bf \theta}({\bf y})} & = \frac{1}{B}\log \prod_{i=1}^B\frac{p(y_i|x_i)}{p_{\theta_0}(y_i)} \\
& = \frac{1}{B}\sum_{i=1}^B \log \frac{p(y_i|x_i)}{p_{\theta_0}(y_i)}
\end{align*}
converges almost surely to
$$\mathbb{E}_{X,Y}\left[\log \frac{p(Y|X)}{p_{\theta_0}(Y)}\right] = I_{\theta_0}(X;Y)$$
as $B \to \infty$. Therefore
$$\frac{p({\bf Y}|{\bf X})}{p_{\bf \theta}({\bf Y})} \leq e^{B(I_{\theta_0}(X;Y)+\varepsilon_B)}$$
with probability at least $1-\varepsilon_B$ for some $\varepsilon_B$ that converges to zero as $B \to \infty$. Let $A_B$ be the event $\left\{({\bf x},{\bf y}) \in \mathcal{X}^B \times \mathcal{Y}^B \; : \; \frac{p({\bf y}|{\bf x})}{p_{\theta_0}({\bf y})} \leq e^{B(I_{\theta_0}(X;Y)+\varepsilon_B)}\right\}$ and let $A_B^C$ be its complement. Following from \eqref{eq:ratio},
\begin{align}
\mathbb{E}_{\bf Y} & \left[\left(\mathbb{E}_{\bf X}\left[\frac{p({\bf Y}|{\bf X})}{p_{\bf \theta}({\bf Y})}\langle u_{\bf Y}, S_\theta({\bf X})\rangle\right]\right)^2\right] \nonumber\\
& \leq \mathbb{E}_{\bf Y}\left[\mathbb{E}_{\bf X}\left[\frac{p({\bf Y}|{\bf X})}{p_{\bf \theta}({\bf Y})}\langle u_{\bf Y}, S_\theta({\bf X})\rangle^2\right]\right] \nonumber\\
 = & \iint_{A_B} \frac{p({\bf y}|{\bf x})}{p_{\bf \theta}({\bf y})}\langle u_{\bf y}, S_\theta({\bf x})\rangle^2 p_\theta({\bf x})p_\theta({\bf y}) \, d\mu({\bf x})d\nu({\bf y}) \label{eq:term1}\\
& + \iint_{A_B^C} \frac{p({\bf y}|{\bf x})}{p_{\bf \theta}({\bf y})}\langle u_{\bf y}, S_\theta({\bf x})\rangle^2 p_\theta({\bf x})p_\theta({\bf y}) \, d\mu({\bf x})d\nu({\bf y}) \label{eq:term2}
\end{align}
We bound the term \eqref{eq:term1} as follows:
\begin{align*}
& \exp \iint 1_{A_B}({\bf x},{\bf y}) \frac{p({\bf y}|{\bf x})}{p_{\bf \theta}({\bf y})}\frac{\lambda}{2}\left(\frac{\langle u_{\bf y}, S_\theta({\bf x})\rangle}{N}\right)^2 \\
&  \quad \quad \quad \quad \quad \quad \quad \quad \quad \quad \quad \quad \cdot p_\theta({\bf x}) p_\theta({\bf y})\, d\mu({\bf x})d\nu({\bf y})  \\
&\leq \iint\frac{p({\bf y}|{\bf x})}{p_{\bf \theta}({\bf y})}\exp\left( 1_{A_B}({\bf x},{\bf y})\frac{\lambda}{2}\left(\frac{\langle u_{\bf y}, S_\theta({\bf x})\rangle}{N}\right)^2\right) \\
&  \quad \quad \quad \quad \quad \quad \quad \quad \quad \quad \quad \quad \cdot p_\theta({\bf x}) p_\theta({\bf y})\, d\mu({\bf x})d\nu({\bf y})  \\
&\leq \varepsilon_B + \iint_{A_B}\frac{p({\bf y}|{\bf x})}{p_{\bf \theta}({\bf y})}\exp\left(\frac{\lambda}{2}\left(\frac{\langle u_{\bf y}, S_\theta({\bf x})\rangle}{N}\right)^2\right) \\
&  \quad \quad \quad \quad \quad \quad \quad \quad \quad \quad \quad \quad \cdot p_\theta({\bf x}) p_\theta({\bf y})\, d\mu({\bf x})d\nu({\bf y})  \\
& \leq \varepsilon_B + e^{B(I_{\theta_0}(X;Y)+\varepsilon_B)} \iint_{A_B}\exp\left(\frac{\lambda}{2}\left(\frac{\langle u_{\bf y}, S_\theta({\bf x})\rangle}{N}\right)^2\right) \\
&  \quad \quad \quad \quad \quad \quad \quad \quad \quad \quad \quad \quad \cdot p_\theta({\bf x}) p_\theta({\bf y})\, d\mu({\bf x})d\nu({\bf y})  \\
& \leq \varepsilon_B + \frac{e^{B(I_{\theta_0}(X;Y)+\varepsilon_B)}}{\sqrt{1-\lambda}} \; .
\end{align*}
for $0\leq\lambda<1$. Taking logs,
\begin{align} \label{eq:term3}
\iint_{A_B} & \frac{p({\bf y}|{\bf x})}{p_{\bf \theta}({\bf y})}\langle u_{\bf y}, S_\theta({\bf x})\rangle^2 p_\theta({\bf x})  p_\theta({\bf y})\, d\mu({\bf x})d\nu({\bf y}) \nonumber \\
& \leq \frac{2N^2}{\lambda}\log\left(\varepsilon_B + \frac{e^{B(I_{\theta_0}(X;Y)+\varepsilon_B)}}{\sqrt{1-\lambda}}\right) \; .
\end{align}
For the other term \eqref{eq:term2},
\begin{align}
& \iint_{A_B^C} \frac{p({\bf y}|{\bf x})}{p_{\bf \theta}({\bf y})}\langle u_{\bf y}, S_\theta({\bf x})\rangle^2 p_\theta({\bf x})p_\theta({\bf y}) \, d\mu({\bf x})d\nu({\bf y}) \nonumber\\
& \leq \iint_{A_B^C} p_\theta({\bf x},{\bf y})\|S_\theta({\bf x})\|_2^2 \, d\mu({\bf x})d\nu({\bf y}) \nonumber \\
& \leq \left( \varepsilon_B \iint p_\theta({\bf x},{\bf y})\|S_\theta({\bf x})\|_2^4d\mu({\bf x})d\nu({\bf y}) \right)^\frac{1}{2} \label{eq:term5}\\
& = \left(\varepsilon_B \int p_\theta({\bf x})\|S_\theta({\bf x})\|_2^4d\mu({\bf x})\right)^\frac{1}{2} \leq \left(c_1\varepsilon_B(dB)^2N^4\right)^\frac{1}{2} \label{eq:term4}
\end{align}
for some absolute constant $c_1$. To get \eqref{eq:term5} we have used the Cauchy-Schwarz inequality, and \eqref{eq:term4} follows by using the sub-Gaussianity of each of the $(dB)^2$ different terms in $\|S_\theta(X)\|_2^4$ to bound their fourth moments. Combining \eqref{eq:term3} and \eqref{eq:term4},
\begin{align*}
\mathsf{Tr}(I_{Y}(\theta_0)) & = \frac{1}{B}\mathsf{Tr}(I_{\bf Y}({\bf \theta})) \\
& \leq \frac{2N^2}{\lambda B}\log\left(\varepsilon_B + \frac{e^{B(I_{\theta_0}(X;Y)+\varepsilon_B)}}{\sqrt{1-\lambda}}\right) \\
& \quad + (c_1\varepsilon_Bd^2N^4)^\frac{1}{2}
\end{align*}
and taking $B\to\infty$ and then $\lambda \to 1$ gives the final result.
\end{proof}

\section{Distributed Statistical Estimation}
The upper bound on Fisher information from Section \ref{sec2} can be of particular interest in a distributed setting, where there are $n$ distinct nodes each with a sample $X_i$ taken i.i.d. from $P_\theta$. In this setting, the nodes communicate information about their samples via a ``public blackboard'' in multiple rounds of communication, and then a centralized estimator uses the blackboard transcript after communication to construct an estimate $\hat\theta$ of the parameter $\theta$.

More formally, on round $t=1,\ldots,T$ of communication, each node $i=1,\ldots,n$ communicates a random variable $Y_{i,t}$ according to its local data $X_i$ and the previous information written on the transcript $Y_{i-1,t},Y_{i-2,t},\ldots,Y_{1,t},\Pi_{t-1},\ldots,\Pi_1$ via a channel $p(y_{i,t}|x_i,y_{i-1,t},\ldots,y_{1,t},\pi_{t-1},\ldots,\pi_1)$. Here we define $\pi_t=(y_{1,t},\ldots,y_{n,t})$ to be the information written to the public blackboard after communication on round $t$. The estimator $\hat\theta$ is then a function of the total blackboard transcript
$$\Pi = (\Pi_1,\ldots,\Pi_T) \; . $$
By using Theorem \ref{thm1}, we have the following upper bound on Fisher information from the transcript $\Pi$.
\begin{cor} \label{cor1}
Suppose that $\langle u,S_\theta(X) \rangle$ is sub-Gaussian with parameter $N$ for any unit vector $u\in\mathbb{R}^d$. Under regularity conditions (i)-(iii) above,
$$\mathsf{Tr}(I_\Pi(\theta)) \leq 2N^2I_\theta(X_1,\ldots,X_n;\Pi) \; .$$
\end{cor}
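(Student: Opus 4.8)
The plan is to reduce to Theorem~\ref{thm1} \emph{one node at a time}: a direct application of that theorem to the product model, with $X$ replaced by $(X_1,\dots,X_n)$ and $Y$ by $\Pi$, is lossy, since the score of $\prod_{i=1}^n P_\theta$ is $\sum_{i=1}^n S_\theta(X_i)$, whose projection onto a unit vector is only sub-Gaussian with parameter $\sqrt{n}\,N$, costing a spurious factor of $n$. The structure to exploit instead is the standard ``rectangle'' property of blackboard protocols: enumerating the messages in the order they are written as $y_1,y_2,\dots$, with $i(k)$ the node writing message $k$ and $\pi_{<k}$ the transcript before it, the joint density factors as $\prod_{i=1}^n p_\theta(x_i)\cdot\prod_k p(y_k\mid x_{i(k)},\pi_{<k})$, so conditioned on any prefix of the transcript the inputs $X_1,\dots,X_n$ remain mutually independent, and appending a message written by a node $j'$ does not change $\mathbb{E}[S_\theta(X_j)\mid\cdot]$ for any $j\neq j'$.

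The first step is the exact decomposition
$$\mathsf{Tr}(I_\Pi(\theta)) \;=\; \mathbb{E}_\Pi\big\|\mathbb{E}[S_\theta(\mathbf X)\mid\Pi]\big\|_2^2 \;=\; \sum_{i=1}^n \mathbb{E}_\Pi\big\|\mathbb{E}[S_\theta(X_i)\mid\Pi]\big\|_2^2 ,$$
where the first equality is \eqref{eq:decomp} for the product model together with $S_\theta(\mathbf X)=\sum_i S_\theta(X_i)$ (only the inequality ``$\le$'' in the second equality is actually needed). For the second equality let $\mathcal F_k=\sigma(y_1,\dots,y_k)$ and consider the vector martingale $M_k=\mathbb{E}[S_\theta(\mathbf X)\mid\mathcal F_k]$, with $M_0=0$ and $M_{\mathrm{final}}=\mathbb{E}[S_\theta(\mathbf X)\mid\Pi]$; orthogonality of martingale increments gives $\mathbb{E}\|M_{\mathrm{final}}\|_2^2=\sum_k\mathbb{E}\|M_k-M_{k-1}\|_2^2$. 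By the rectangle property only the $i(k)$-th summand of the score moves when message $k$ is written, so $M_k-M_{k-1}=\mathbb{E}[S_\theta(X_{i(k)})\mid\mathcal F_k]-\mathbb{E}[S_\theta(X_{i(k)})\mid\mathcal F_{k-1}]$; collecting these increments over the positions at which a fixed node $i$ speaks, and using once more that intervening messages of other nodes leave $\mathbb{E}[S_\theta(X_i)\mid\cdot]$ fixed, the $i$-th group telescopes to $\mathbb{E}_\Pi\|\mathbb{E}[S_\theta(X_i)\mid\Pi]\|_2^2$.

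The second step bounds each term by conditioning on $X_{-i}=(X_j)_{j\neq i}$. The rectangle property gives $\mathbb{E}[S_\theta(X_i)\mid\Pi]=\mathbb{E}[S_\theta(X_i)\mid\Pi,X_{-i}]$, so $\mathbb{E}_\Pi\|\mathbb{E}[S_\theta(X_i)\mid\Pi]\|_2^2=\mathbb{E}_{X_{-i}}\big[\mathbb{E}_{\Pi\mid X_{-i}}\|\mathbb{E}[S_\theta(X_i)\mid\Pi,X_{-i}]\|_2^2\big]$. For a fixed value $X_{-i}=x_{-i}$, running the protocol with the other inputs frozen turns $X_i\mapsto\Pi$ into a channel not depending on $\theta$, while $X_i\sim P_\theta$ carries exactly the score $S_\theta$ assumed to be sub-Gaussian with parameter $N$; hence Theorem~\ref{thm1} applies to this frozen-protocol channel (its regularity conditions (i)--(iii) being inherited) and yields $\mathbb{E}_{\Pi\mid x_{-i}}\|\mathbb{E}[S_\theta(X_i)\mid\Pi,X_{-i}=x_{-i}]\|_2^2\le 2N^2 I_\theta(X_i;\Pi\mid X_{-i}=x_{-i})$. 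Averaging over $x_{-i}$ gives $\mathbb{E}_\Pi\|\mathbb{E}[S_\theta(X_i)\mid\Pi]\|_2^2\le 2N^2 I_\theta(X_i;\Pi\mid X_{-i})$.

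Finally I would reassemble the per-node bounds, and this is where the blackboard structure is used essentially: $\sum_{i=1}^n I_\theta(X_i;\Pi\mid X_{-i})=I_\theta(X_1,\dots,X_n;\Pi)$. Expanding both sides by the chain rule over the messages $y_k$ and using that, given $\pi_{<k}$, the message $y_k$ is a function of $X_{i(k)}$ alone while $X_{-i(k)}$ is independent of $(X_{i(k)},y_k)$, both sides reduce to $\sum_k I_\theta(X_{i(k)};y_k\mid\pi_{<k})$. Combining the three steps gives $\mathsf{Tr}(I_\Pi(\theta))\le 2N^2 I_\theta(X_1,\dots,X_n;\Pi)$. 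I expect the telescoping decomposition of the first step to be the main obstacle — in particular making airtight that appending a message written by node $j'$ does not change $\mathbb{E}[S_\theta(X_i)\mid\cdot]$ for $i\neq j'$, which is precisely where the interactive (as opposed to one-shot) nature of the protocol has to be controlled; the reassembly identity is the other delicate point, since for general, non-blackboard channels $\sum_i I_\theta(X_i;\Pi\mid X_{-i})$ can strictly exceed $I_\theta(X_1,\dots,X_n;\Pi)$ and the argument would break.
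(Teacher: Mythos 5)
Your proof is correct and follows essentially the same route as the paper's: the same per-node decomposition $\mathsf{Tr}(I_\Pi(\theta))=\sum_{i=1}^n\mathbb{E}_\Pi\left\|\mathbb{E}[S_\theta(X_i)\mid\Pi]\right\|_2^2$ (which the paper imports from Appendix E of \cite{ldp_fisher} rather than rederiving via martingale increments and the rectangle property), followed by an application of Theorem~\ref{thm1} to each node and a reassembly over nodes. The one substantive refinement on your side is conditioning on $X_{-i}$ before invoking Theorem~\ref{thm1}, which makes the induced channel $X_i\to\Pi$ genuinely $\theta$-independent (a point the paper's one-line bound of each term by $2N^2I_\theta(X_i;\Pi)$ glosses over, since marginalizing over $X_{-i}\sim P_\theta^{n-1}$ makes that channel depend on $\theta$), with the corresponding reassembly using your blackboard identity $\sum_i I_\theta(X_i;\Pi\mid X_{-i})=I_\theta(X_1,\ldots,X_n;\Pi)$ in place of the paper's superadditivity inequality $\sum_i I_\theta(X_i;\Pi)\le I_\theta(X_1,\ldots,X_n;\Pi)$ for independent inputs.
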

\begin{proof}
Using the equivalent of \eqref{eq:decomp} in this interactive communication setting,
\begin{align}
\mathsf{Tr}(I_\Pi(\theta)) = \sum_{i=1}^n \mathbb{E}_\Pi \left\|\frac{\mathbb{E}_{X_i}\left[S_{\theta}(X_i)p_{i,\Pi}(X_i)\right]}{\mathbb{E}_{X_i}\left[p_{i,\Pi}(X_i)\right]}\right\|^2 \label{eq:bb_term1}
\end{align}
where $$p_{i,\pi}(x_i) = \prod_{t} p(y_{i,t}|x_i,y_{i-1,t},\ldots,y_{1,t},\pi_{t-1},\ldots,\pi_1) \; .$$
The decomposition \eqref{eq:bb_term1} is detailed in Appendix E of \cite{ldp_fisher}. Using Theorem \ref{thm1} above, each term in the sum in \eqref{eq:bb_term1} can be upper bounded by $2N^2I_\theta(X_i;\Pi)$. Thus by the independence of the $X_i$,
$$\mathsf{Tr}(I_\Pi(\theta)) \leq 2N^2I_\theta(X_1,\ldots,X_n;\Pi) \; .$$
\end{proof}

We now consider the special case of the Gaussian location model where $P_\theta = \mathcal{N}(\theta,\sigma^2I_d)$ and $\Theta = [-1,1]^d$. It can be readily checked that in this case
$S_\theta(X) \sim \mathcal{N}\left(0,\frac{1}{\sigma^2}I_d\right).$ This leads to the following upper bound on Fisher information, where we will see shortly that the constant on the right-hand side is optimal.
\begin{cor} \label{cor2}
Suppose that $P_\theta = \mathcal{N}(\theta,\sigma^2I_d)$. Then
$$\mathsf{Tr}(I_\Pi(\theta)) \leq \frac{2}{\sigma^2}I_\theta(X_1,\ldots,X_n;\Pi) \; .$$
\end{cor}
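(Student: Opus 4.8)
The plan is to verify the sub-Gaussianity hypothesis of Corollary~\ref{cor1} for the Gaussian location model and then instantiate that corollary with the correct value of $N$. First I would compute the score explicitly: since $\log p_\theta(x) = -\frac{d}{2}\log(2\pi\sigma^2) - \frac{1}{2\sigma^2}\|x-\theta\|_2^2$, differentiating in $\theta$ gives $S_\theta(X) = \nabla_\theta \log p_\theta(X) = \frac{1}{\sigma^2}(X-\theta)$. Because $X\sim\mathcal{N}(\theta,\sigma^2 I_d)$ we have $X-\theta\sim\mathcal{N}(0,\sigma^2 I_d)$, so $S_\theta(X)\sim\mathcal{N}\!\left(0,\frac{1}{\sigma^2}I_d\right)$, as asserted in the text preceding the statement.

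Next I would pin down the sub-Gaussian constant. For any unit vector $u\in\mathbb{R}^d$, the projection $\langle u, S_\theta(X)\rangle$ is a centered scalar Gaussian with variance $u^T\!\left(\frac{1}{\sigma^2}I_d\right)u = \frac{1}{\sigma^2}$. Since a centered Gaussian $Z$ of variance $v^2$ satisfies $\mathbb{E}[e^{\lambda Z}] = e^{\lambda^2 v^2/2}$ for all $\lambda$, it is sub-Gaussian with parameter $v$; hence $\langle u, S_\theta(X)\rangle$ is sub-Gaussian with parameter $N = 1/\sigma$, uniformly over unit vectors $u$. I would also note in passing that the restriction $\theta\in[-1,1]^d$ plays no role here, since the score and all of its moments are computed from the density on $\mathbb{R}^d$ and do not depend on $\theta$.

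Finally I would check that the regularity conditions (i)--(iii) are in force and then appeal to Corollary~\ref{cor1}. For the Gaussian model, $\sqrt{p_\theta(x)}$ is $C^\infty$ in each $\theta_j$ for every $x$, giving (i); the per-coordinate Fisher information is the constant $1/\sigma^2$, trivially continuous in $\theta_j$, giving (ii); and (iii) is a condition on the communication channels defining the blackboard protocol rather than on the model, and is part of the standing hypotheses inherited from Corollary~\ref{cor1}. Plugging $N = 1/\sigma$ into Corollary~\ref{cor1} then yields $\mathsf{Tr}(I_\Pi(\theta)) \leq 2N^2 I_\theta(X_1,\ldots,X_n;\Pi) = \frac{2}{\sigma^2}I_\theta(X_1,\ldots,X_n;\Pi)$, which is the claim. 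There is no substantive obstacle: this corollary is simply the specialization of Corollary~\ref{cor1} to the Gaussian location model, and the only thing to be careful about is the routine bookkeeping of the regularity conditions and the value of the sub-Gaussian parameter.
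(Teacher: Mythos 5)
Your proposal is correct and follows exactly the route the paper intends: the paper states Corollary~\ref{cor2} as an immediate specialization of Corollary~\ref{cor1}, noting only that $S_\theta(X)\sim\mathcal{N}(0,\frac{1}{\sigma^2}I_d)$, which gives sub-Gaussian parameter $N=1/\sigma$ and hence the constant $2N^2=2/\sigma^2$. Your verification of the score computation and the regularity conditions just fills in the details the paper leaves to the reader.
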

By using Corollary \ref{cor2} along with the multivariate van Trees inequality from \cite{gill}, we immediately get the following lower bound on the minimax risk in estimating $\theta$ from the transcript $\Pi$.
\begin{cor} \label{cor3}
Suppose that $P_\theta = \mathcal{N}(\theta,\sigma^2I_d)$ and $\Theta = [-1,1]^d$. Then
$$\sup_{\theta\in\Theta} \mathbb{E}\|\hat\theta(\Pi)-\theta\|_2^2 \geq \frac{d^2}{\frac{2}{\sigma^2}\sup_{\theta\in\Theta}I_\theta(X_1,\ldots,X_n;\Pi)+\pi^2d} \; .$$
\end{cor}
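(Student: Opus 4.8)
The plan is to combine Corollary~\ref{cor2} with the trace version of the multivariate van Trees inequality from \cite{gill}, instantiated with a smooth product prior on $\Theta=[-1,1]^d$. First I would fix the prior: let $\lambda(\theta)=\prod_{j=1}^d\lambda_0(\theta_j)$ with $\lambda_0(t)=\cos^2(\pi t/2)$ on $[-1,1]$. This density is continuously differentiable, and both $\lambda_0$ and $\lambda_0'$ vanish at $t=\pm1$, so the boundary terms produced by the integration-by-parts step underlying van Trees disappear and the hypotheses of the inequality are met. Its one-dimensional Fisher information is $\int_{-1}^1(\lambda_0'(t))^2/\lambda_0(t)\,dt=\pi^2$, so the trace of the Fisher information matrix of the product prior $\lambda$ equals $\pi^2 d$.

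Next I would invoke the multivariate van Trees inequality (which, unlike Cram\'er--Rao, applies to arbitrarily biased $\hat\theta$): for any estimator $\hat\theta(\Pi)$,
$$\int_\Theta \mathbb{E}\bigl[\|\hat\theta(\Pi)-\theta\|_2^2\bigr]\,\lambda(\theta)\,d\theta \;\geq\; \frac{d^2}{\int_\Theta \mathsf{Tr}(I_\Pi(\theta))\,\lambda(\theta)\,d\theta + \pi^2 d}\,,$$
where the $d^2$ arises from the matrix form of van Trees together with the bound $\mathsf{Tr}(A^{-1})\geq d^2/\mathsf{Tr}(A)$ applied to $A=\int_\Theta I_\Pi(\theta)\lambda(\theta)\,d\theta+\tilde I(\lambda)$. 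Since the left-hand side is the Bayes risk under $\lambda$, it is at most $\sup_{\theta\in\Theta}\mathbb{E}\|\hat\theta(\Pi)-\theta\|_2^2$, which turns this into a lower bound on the minimax risk in terms of the $\lambda$-averaged Fisher information of the transcript.

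Then I would substitute Corollary~\ref{cor2}: for the Gaussian location model $\mathsf{Tr}(I_\Pi(\theta))\leq \tfrac{2}{\sigma^2}I_\theta(X_1,\ldots,X_n;\Pi)$ pointwise in $\theta$, hence $\int_\Theta \mathsf{Tr}(I_\Pi(\theta))\lambda(\theta)\,d\theta \leq \tfrac{2}{\sigma^2}\sup_{\theta\in\Theta}I_\theta(X_1,\ldots,X_n;\Pi)$ because $\lambda$ integrates to one. Plugging this into the denominator and using that $t\mapsto d^2/(t+\pi^2 d)$ is decreasing gives exactly the stated bound.

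The steps are essentially routine; the points that need care are (a) verifying the regularity hypotheses of van Trees in this interactive transcript setting — absolute continuity and the vanishing-at-the-boundary conditions on the prior and integrability of the score — which is precisely why the cosine-squared prior is used; (b) pinning down the prior Fisher information constant $\pi^2$, since this is the additive term that controls tightness of the bound; and (c) checking that the direction of the inequality in Corollary~\ref{cor2} is the one compatible with the monotonicity of $d^2/(\cdot)$, so that upper-bounding $\mathsf{Tr}(I_\Pi(\theta))$ correctly lower-bounds the risk. None of these is a deep obstacle, but (a) is the one I would write out most carefully.
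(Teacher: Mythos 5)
Your proposal is correct and follows exactly the route the paper intends: the paper's entire proof is the one-line remark that the corollary follows ``immediately'' from Corollary~\ref{cor2} together with the multivariate van Trees inequality of Gill and Levit, and your write-up simply supplies the standard details (the $\cos^2(\pi t/2)$ product prior with prior Fisher information $\pi^2 d$, the $d^2$ numerator from the trace form of van Trees, bounding the Bayes risk by the sup risk, and substituting the pointwise bound $\mathsf{Tr}(I_\Pi(\theta))\le \tfrac{2}{\sigma^2}I_\theta(X_1,\ldots,X_n;\Pi)$). All constants and inequality directions check out.
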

In order to see that the constant 2 in Corollaries \ref{cor2} and \ref{cor3} is optimal, consider the following example. Let $d=1$ and $T=1$ and suppose that communication is done independently over additive white Guassian noise channels so that $Y_i = X_i + W_i$ where $W_i\sim\mathcal{N}(0,\sigma_\text{noise}^2)$ and $\Pi=(Y_1,\ldots,Y_n)$. In this case 
$$I_\theta(X_1,\ldots,X_n;\Pi) = \frac{n}{2}\log\left(1+\frac{\sigma^2}{\sigma_\text{noise}^2}\right)$$
and Corollary \ref{cor3} gives a lower bound of
$$\sup_{\theta\in\Theta} \mathbb{E}[(\hat\theta(\Pi)-\theta)^2] \geq \frac{\sigma^2}{n\log\left(1+\frac{\sigma^2}{\sigma_\text{noise}^2}\right)}$$
assuming that $n$ is large enough so that the second term in denominator is negligible. The simple averaging estimator $\hat\theta(\Pi) = \frac{1}{n}\sum_{i=1}^n Y_i$ is unbiased and has variance $\frac{\sigma^2}{n}\left(1+\frac{\sigma_\text{noise}^2}{\sigma^2}\right).$ In the regime where $\sigma_\text{noise}^2 >> \sigma^2$, both the lower bound and expected squared error become $\frac{\sigma_\text{noise}^2}{n}$, and thus any constant less than 2 in Corollary \ref{cor2} would lead to a contradiction.

The distributed estimation setting considered here is the same setting that is considered in past works \cite{barnes,ldp_fisher}, except that here we do not assume the channels $p(y_{i,t}|x_i,y_{i-1,t},\ldots,y_{1,t},\pi_{t-1},\ldots,\pi_1)$ have a particular communication or privacy-constrained structure, and instead leave them general and have the constraint be the total mutual information $I_\theta(X_1,\ldots,X_n;\Pi)$. The distributed data-processing inequality from \cite{braverman} can also apply to general mutual information constrained settings, but requires a bounded likelihood ratio assumption that we do not make. One of the benefits of our Fisher information bound is that it can be applied to the Gaussian location model directly, without the need to truncate the Gaussians so that they satisfy this bounded likelihood ratio assumption. In the subsequent section we show how our Fisher information bounds can also imply Jensen-Shannon divergence based strong data processing inequalities that are very similar to those from \cite{braverman}.

\section{Relation to Divergence Bounds}
Because of Fisher information's interpretation as a second-order approximation of KL divergence (or Jensen-Shannon divergence) as one moves around the parameter space $\Theta$, the above upper bounds on Fisher information can also imply similar upper bounds on the divergence between two distributions from $\{Q_\theta\}_{\theta\in\Theta}$ where $Q_\theta = P_\theta^n \circ P_{\Pi|X_1,\ldots,X_n}$ is the induced distribution for $\Pi$. 

Instead of working with KL divergence directly, we will instead analyze the related Jensen-Shannon divergence because of its nicer properties. In particular, its square root satisfies the triangle inequality which we describe and use below. Let
$$\mathsf{JS}(P\|Q) = \mathsf{KL}\left(P\bigg\|\frac{P+Q}{2}\right) + \mathsf{KL}\left(Q\bigg\|\frac{P+Q}{2}\right)$$
be the Jensen-Shannon divergence between distributions $P$ and $Q$ where $\mathsf{KL}(P\|Q)$ is the usual KL divergence. The square-root of the Jensen-Shannon divergence satisfies the triangle inequality \cite{endres} in that
$$\sqrt{\mathsf{JS}(P\|Q)} \leq \sqrt{\mathsf{JS}(P\|R)} + \sqrt{\mathsf{JS}(Q\|R)}$$
for any distributions $P,Q,$ and $R$. 

In this section we will need the following additional regularity condition.
\begin{itemize}
\item[(iv)] Suppose that $\mathsf{JS}(Q_\theta\|Q_{\theta+\Delta\theta})$ can be represented by its Taylor expansion
 $$\mathsf{JS}(Q_\theta\|Q_{\theta+\Delta\theta}) = \frac{1}{4}\Delta\theta^T I_{\Pi}(\theta) \Delta\theta + O(\|\Delta\theta\|^3)$$
in such a way that the constants in the Big-O term can be made independent of the choice of $\theta\in\Theta$.
\end{itemize}For the Jensen-Shannon divergence we will prove the following data processing bound.
\begin{thm} \label{thm2}
Suppose $\theta_0,\theta_1\in\Theta$ are such that $\theta_\lambda = \lambda \theta_1 + (1-\lambda)\theta_0$ for $\lambda\in[0,1]$ are contained in $\Theta$. Under the assumptions in Theorem \ref{thm1} above and regularity condition (iv),
$$\mathsf{JS}(Q_{\theta_0}\|Q_{\theta_1}) \leq \frac{\|\theta_1-\theta_0\|_2^2N^2}{2}\int_0^1 I_{\theta_\lambda}(X_1,\ldots,X_n;\Pi)d\lambda \; .$$
\end{thm}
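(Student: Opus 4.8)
The plan is to discretize the segment $\{\theta_\lambda\}_{\lambda\in[0,1]}$ and chain together infinitesimal Jensen--Shannon increments via the triangle inequality for $\sqrt{\mathsf{JS}}$, controlling each increment with Corollary \ref{cor1}. Fix $m\in\mathbb{N}$, set $\lambda_k=k/m$ for $k=0,\ldots,m$, and write $\Delta=\theta_1-\theta_0$, so that $\theta_{\lambda_k}-\theta_{\lambda_{k-1}}=\Delta/m$ and each $\theta_{\lambda_k}$ lies in $\Theta$ by hypothesis. The triangle inequality then gives
$$\sqrt{\mathsf{JS}(Q_{\theta_0}\|Q_{\theta_1})}\leq\sum_{k=1}^m\sqrt{\mathsf{JS}(Q_{\theta_{\lambda_{k-1}}}\|Q_{\theta_{\lambda_k}})}\; .$$

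Next I would expand each term using regularity condition (iv): $\mathsf{JS}(Q_{\theta_{\lambda_{k-1}}}\|Q_{\theta_{\lambda_k}})=\frac{1}{4m^2}\Delta^T I_\Pi(\theta_{\lambda_{k-1}})\Delta+O(m^{-3})$, the remainder being uniform in $k$ thanks to the uniformity clause of (iv). Since the Fisher information matrix is positive semidefinite, $\Delta^T I_\Pi(\theta)\Delta\leq\|\Delta\|_2^2\,\lambda_{\max}(I_\Pi(\theta))\leq\|\Delta\|_2^2\,\mathsf{Tr}(I_\Pi(\theta))$, and Corollary \ref{cor1} bounds $\mathsf{Tr}(I_\Pi(\theta))$ by $2N^2 I_\theta(X_1,\ldots,X_n;\Pi)$. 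Hence each increment is at most $\frac{N^2\|\Delta\|_2^2}{2m^2}I_{\theta_{\lambda_{k-1}}}(X_1,\ldots,X_n;\Pi)+O(m^{-3})$. Applying $\sqrt{a+b}\leq\sqrt{a}+\sqrt{|b|}$ and summing over $k$,
$$\sqrt{\mathsf{JS}(Q_{\theta_0}\|Q_{\theta_1})}\leq\frac{N\|\Delta\|_2}{\sqrt{2}}\cdot\frac{1}{m}\sum_{k=1}^m\sqrt{I_{\theta_{\lambda_{k-1}}}(X_1,\ldots,X_n;\Pi)}+m\cdot O(m^{-3/2})\; .$$

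Letting $m\to\infty$, the accumulated error is $O(m^{-1/2})\to0$ and the Riemann sum converges to $\int_0^1\sqrt{I_{\theta_\lambda}(X_1,\ldots,X_n;\Pi)}\,d\lambda$, yielding $\sqrt{\mathsf{JS}(Q_{\theta_0}\|Q_{\theta_1})}\leq\frac{N\|\Delta\|_2}{\sqrt{2}}\int_0^1\sqrt{I_{\theta_\lambda}(X_1,\ldots,X_n;\Pi)}\,d\lambda$. Squaring both sides and then applying Jensen's inequality (equivalently Cauchy--Schwarz) to move the square inside the integral, $\bigl(\int_0^1\sqrt{I_{\theta_\lambda}}\,d\lambda\bigr)^2\leq\int_0^1 I_{\theta_\lambda}\,d\lambda$, produces the claimed bound $\mathsf{JS}(Q_{\theta_0}\|Q_{\theta_1})\leq\frac{\|\theta_1-\theta_0\|_2^2 N^2}{2}\int_0^1 I_{\theta_\lambda}(X_1,\ldots,X_n;\Pi)\,d\lambda$.

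I expect the delicate step to be the passage to the limit: one must know that $\lambda\mapsto I_{\theta_\lambda}(X_1,\ldots,X_n;\Pi)$ is regular enough (continuity, or at least Riemann integrability) for the sum to converge to the stated integral, and, more importantly, that the cubic remainder in the Taylor expansion of $\mathsf{JS}$ is genuinely uniform along the segment --- which is exactly what regularity condition (iv) is imposed to guarantee, since without uniformity the cumulative error $m\cdot O(m^{-3/2})$ need not vanish. The remaining ingredients (positive semidefiniteness of Fisher information, subadditivity of $\sqrt{\cdot}$, and Jensen's inequality) are routine.
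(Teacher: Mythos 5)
Your proposal is correct and follows essentially the same route as the paper's proof: chaining $\sqrt{\mathsf{JS}}$ increments along the segment via the triangle inequality, expanding each increment with the Taylor expansion from condition (iv), bounding the quadratic form by the trace and then by the mutual information via Corollary \ref{cor1}, and passing to a Riemann integral. The only difference is one of ordering --- the paper applies Cauchy--Schwarz to the finite sum of $\sqrt{\mathsf{JS}}$ terms before taking $M\to\infty$, so it only needs Riemann integrability of the entries of $I_\Pi(\theta_\lambda)$ (justified from conditions (i)--(ii)) and can then invoke Corollary \ref{cor1} pointwise under the integral, whereas you defer the squaring to the integral level, which additionally requires the Riemann sums of $\sqrt{I_{\theta_\lambda}(X_1,\ldots,X_n;\Pi)}$ to converge, i.e.\ regularity of the mutual information as a function of $\lambda$, a (mild) extra requirement you correctly flag yourself.
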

Note that if we consider the random variable $V$ to represent a prior that chooses the parameter $\theta_0$ or $\theta_1$ each with probability $\frac{1}{2}$, then
$$I(V;\Pi) = \frac{1}{2}\mathsf{JS}(Q_{\theta_0}\|Q_{\theta_1})$$
and we can write the result from Theorem \ref{thm2} as
\begin{equation}I(V;\Pi) \leq \frac{\|\theta_1-\theta_0\|_2^2N^2}{4}\int_0^1 I_{\theta_\lambda}(X_1,\ldots,X_n;\Pi)d\lambda \; . \label{eq:thm2} \end{equation}
We compare this to the following theorem. Let $\beta(P_{\theta_0},P_{\theta_1})$ be the strong data processing inequality (SDPI) constant defined to be the minimum value $\beta$ such that
$$I(V;\Pi) \leq \beta I(X;\Pi)$$
where $V\to X\to\Pi$ is a Markov chain and $V\sim\text{Bern}(1/2)$ picks whether $X$ is drawn from $P_{\theta_0}$ or $P_{\theta_1}$ as above.
\begin{thm}[\cite{braverman} Theorem 1.1] \label{thm3}
Suppose $P_{\theta_0},P_{\theta_1}$ are two probability distributions with $\frac{1}{c}P_{\theta_0}\leq P_{\theta_1}\leq c P_{\theta_0}$ for some constant $c\geq 1$. Let $\beta(P_{\theta_0},P_{\theta_1})$ be the SDPI constant defined above. Then
\begin{equation}I(V;\Pi) \leq Kc\beta(P_{\theta_0},P_{\theta_1})\min_{v\in\{\theta_0,\theta_1\}}I_{v}(X_1,\ldots,X_n;\Pi) \label{eq:thm3} \end{equation}
for a universal constant $K$.
\end{thm}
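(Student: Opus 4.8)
The plan is to reduce this distributed, interactive bound to the single-node SDPI constant $\beta=\beta(P_{\theta_0},P_{\theta_1})$ by a message-by-message chain-rule argument, using the pointwise likelihood-ratio bound $\frac1c P_{\theta_0}\le P_{\theta_1}\le c P_{\theta_0}$ to keep the effective contraction coefficient from degrading as the transcript lengthens; this essentially recovers the argument of \cite{braverman}. List the messages of the protocol as $M_1,M_2,\ldots$ across all rounds and all nodes, write $M_{<j}$ for the part of the transcript preceding $M_j$, and let $i(j)$ be the node that produces $M_j$. By the chain rule,
$$I(V;\Pi)=\sum_j I(V;M_j\mid M_{<j})=\sum_j \mathbb{E}_{M_{<j}}\!\left[I(V;M_j\mid M_{<j}=h)\right].$$
Since $M_j$ is a function of $X_{i(j)}$, the history $h$, and private randomness only, conditioning on $M_{<j}=h$ gives the Markov chain $V\to X_{i(j)}\to M_j$, and I would apply a conditional SDPI
$$I(V;M_j\mid M_{<j}=h)\le \beta_h\, I(X_{i(j)};M_j\mid M_{<j}=h),$$
where $\beta_h$ is the SDPI constant of the conditioned pair $\bigl(P(X_{i(j)}\mid V=\theta_0,h),\,P(X_{i(j)}\mid V=\theta_1,h)\bigr)$ under the posterior prior $P(V\mid h)$.

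The data side then collapses cleanly: because $M_j$ depends on node $i(j)$ only, $M_j\perp X_{-i(j)}\mid X_{i(j)},M_{<j}$, so $I(X_{i(j)};M_j\mid M_{<j})=I(X_1,\ldots,X_n;M_j\mid M_{<j})$, and summing and reassembling the chain rule gives $\sum_j I(X_{i(j)};M_j\mid M_{<j})=I(X_1,\ldots,X_n;\Pi)$. Granting a uniform bound $\beta_h\le Kc\,\beta$ over reachable histories, combining the two displays above yields $I(V;\Pi)\le Kc\,\beta\, I(X_1,\ldots,X_n;\Pi)$. It remains to replace this last (mixture) information by $\min_{v\in\{\theta_0,\theta_1\}}I_{v}(X_1,\ldots,X_n;\Pi)$; this too should be carried out at the level of individual messages, where each term involves only a single coordinate of the data and can therefore be compared to its pure-$P_{\theta_v}$ counterpart at the cost of only a factor $c$, after which the minimizing $v$ is chosen and all constants are absorbed into $K$.

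The main obstacle — and the reason the likelihood-ratio hypothesis is indispensable — is the uniform bound $\beta_h\le Kc\,\beta$. Two difficulties arise. First, conditioning skews the posterior prior $P(V\mid h)$ away from uniform, so one needs an SDPI-constant estimate that is robust to the input prior: that a pair of laws with likelihood ratio in $[1/c,c]$ has SDPI constant at most a universal multiple of $c\beta$ under \emph{any} prior on $V$. Second, and more delicate, conditioning on an adaptively generated transcript entangles the $n$ players' data, so that the single-coordinate posterior $P(X_{i(j)}\mid V=v,h)$ is reweighted in a manner encoding how the whole transcript discriminates $\theta_0$ from $\theta_1$ through the \emph{other} players; applying the coordinatewise bound naively would cost a prohibitive factor $c^{\,n}$. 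Controlling this cross-player reweighting — exploiting that the channels themselves do not depend on $V$, so that the $V$-dependence enters only through the data laws and remains governed by the single-letter ratio in $[1/c,c]$ — and thereby showing that the per-message contraction never inflates by more than $O(c)$ regardless of history, is the technical heart of the argument. Both the factor $c$ and the universal constant $K$ in the statement originate from this step.
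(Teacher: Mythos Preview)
The paper does not prove Theorem~\ref{thm3}: it is quoted verbatim from \cite{braverman} (Theorem~1.1 there) solely as a point of comparison with the paper's own Theorem~\ref{thm2}, and no argument for it appears anywhere in the text. There is therefore no ``paper's own proof'' to compare your proposal against.

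As for the proposal itself, it is a reasonable high-level outline of the Braverman et al.\ argument --- chain-rule over messages, per-message SDPI, and reassembly --- and you correctly locate the two genuine difficulties: controlling the conditional SDPI constant $\beta_h$ uniformly over reachable histories, and passing from the mixture information to $\min_v I_v(X_1,\ldots,X_n;\Pi)$. But you explicitly do not resolve either step (``Granting a uniform bound\ldots'', ``is the technical heart of the argument''), so what you have is a plan rather than a proof. The cross-player reweighting issue you flag is real and is exactly where the work in \cite{braverman} goes; your sketch does not yet supply the mechanism that prevents the $c^n$ blow-up you worry about. For the actual details you must consult \cite{braverman} directly.
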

Note that in Theorem \ref{thm3} we have only presented the special case $V\sim\text{Bern}(1/2)$ so that the left-hand side can be written as the mutual information, whereas the general theorem does not assume even probabilities and instead the left-hand side is written as a Hellinger distance term.

To compare \eqref{eq:thm2} and \eqref{eq:thm3} consider what happens in the Gaussian case where $P_\theta\sim\mathcal{N}(\theta,\sigma I_d)$. In this case the SDPI constant $\beta(P_{\theta_0},P_{\theta_1}) = O\left( \frac{\|\theta_0-\theta_1\|_2^2}{\sigma^2}\right)$ as used in \cite{braverman} and shown in \cite{raginsky}. In this case the right-hand sides of the two bounds are very similar with only some minor differences. In particular, \eqref{eq:thm3} has the minimum mutual information over the two parameter values, but has an extra factor of $c$ to compensate. In contrast, \eqref{eq:thm2} is written in terms on the average mutual information along the linear path from $\theta_0$ to $\theta_1$ in the parameter space. One benefit of \eqref{eq:thm2} over \eqref{eq:thm3} is that it does not need this bounded likelihood ratio assumption, instead assuming the sub-Gaussian score function property from Theorem \ref{thm1}, and therefore it can be applied to the Gaussian case directly without any need for truncation.

\subsection{Proof of Theorem \ref{thm2}}
Using a multivariate Taylor expansion of $\mathsf{JS}(Q_\theta\|Q_{\theta+\Delta\theta})$  at $\Delta\theta=0$,
\begin{align} 
\mathsf{JS}(Q_\theta\|Q_{\theta+\Delta\theta}) = & \frac{1}{4}\Delta\theta^TI_\Pi(\theta)\Delta\theta + O(\|\Delta\theta\|^3) \; . \label{eq:taylor}
\end{align}
Given two parameter values $\theta_0,\theta_1\in\Theta\subseteq\mathbb{R}^d$ we can upper bound the Jensen-Shannon divergence between $Q_{\theta_0}$ and $Q_{\theta_1}$ as follows. For any real number $0\leq\lambda\leq 1$ we define $\theta_\lambda = \lambda \theta_1 + (1-\lambda)\theta_0$. By the triangle inequality,
\begin{align*} 
\sqrt{\mathsf{JS}(Q_{\theta_0}\|Q_{\theta_1})} & \leq \sum_{i=1}^M \sqrt{\mathsf{JS}\left(Q_{\theta_\frac{i-1}{M}}\bigg\|Q_{\theta_\frac{i}{M}}\right)}
\end{align*}
and then by squaring and using Jensen's inequality,
\begin{align} 
& \mathsf{JS}(Q_{\theta_0}\|Q_{\theta_1}) \nonumber \\
& \leq \left(\sum_{i=1}^M \sqrt{\mathsf{JS}\left(Q_{\theta_\frac{i-1}{M}}\bigg\|Q_{\theta_\frac{i}{M}}\right)}\right)^2 \nonumber \\
& \leq M\sum_{i=1}^M \mathsf{JS}\left(Q_{\theta_\frac{i-1}{M}}\bigg\|Q_{\theta_\frac{i}{M}}\right) \nonumber \\
& = \frac{M}{4}\sum_{i=1}^M\left(\frac{\theta_1-\theta_0}{M}\right)^TI_\Pi\left(\theta_\frac{i-1}{M}\right)\left(\frac{\theta_1-\theta_0}{M}\right) \nonumber \\
& \quad + MO\left(\left\|\frac{\theta_1-\theta_0}{M}\right\|^3\right) \label{eq:int}
\end{align}
where the last line \eqref{eq:int} follows from \eqref{eq:taylor}. Continuing from \eqref{eq:int},
\begin{align*}
 \mathsf{JS}(Q_{\theta_0}\|Q_{\theta_1}) \leq & \frac{1}{4}\|\theta_1-\theta_0\|_2^2\frac{1}{M}\sum_{i=1}^M \mathsf{Tr}\left( I_\Pi\left(\theta_\frac{i-1}{M}\right)\right) \\
& \quad + MO\left(\left\|\frac{\theta_1-\theta_0}{M}\right\|^3\right) \; .
\end{align*}
Taking $M\to\infty$ we get that
\begin{align*}
\mathsf{JS}(Q_{\theta_0}\|Q_{\theta_1}) \leq  \frac{1}{4}\|\theta_1-\theta_0\|_2^2 \int_0^1\mathsf{Tr}\left( I_\Pi\left(\theta_\lambda\right)\right)d\lambda \; .
\end{align*}
where we have used the Riemann integrability of the entries of $I_\Pi(\theta_\lambda)$ which can be shown using regularity conditions (i) and (ii). The result then follows from Corollary \ref{cor1}.
\subsection{One-Parameter Families of Distributions}
In this section we apply Theorem \ref{thm2} to the case where we are given two distributions $\mu_0$ and $\mu_1$ that do satisfy the bounded likelihood ratio assumption, even if they are not part of a parametric family. Suppose the two distributions have densities $f_0(x)$ and $f_1(x)$, respectively. In this case we can define a one-parameter family of distributions between the two using an exponential twist such as in \cite{relations}. For $\theta\in[0,1]$ we define a new density $f_\theta$ by
$$f_\theta(x) = \frac{1}{C_\theta}f_1^\theta(x)f_0^{1-\theta}(x)$$
where $C_\theta = \int f_1^\theta(x)f_0^{1-\theta}(x)dx$ in order to normalize the density. The score function for this one-parameter family is
$$S_\theta(x) = \log\frac{f_1(x)}{f_0(x)} - \frac{C'_\theta}{C_\theta} \; .$$
If, like in \cite{braverman}, we make the bounded likelihood ratio assumption $$\frac{1}{c}f_0(x) \leq f_1(x) \leq cf_0(x) \; ,$$
then
$|S_\theta(x)| \leq 2\log c$ for all $x$. This implies the score function is sub-Gaussian with a parameter that is $O(\log c)$, and yields
$$I(V;\Pi) \leq K(\log c)^2 \int_0^1 I_{\lambda}(X_1,\ldots,X_n;\Pi)d\lambda$$
for an absolute constant $K$.

\bibliographystyle{IEEEtran}
\bibliography{main.bib}


\end{document}